\renewcommand{\@IEEEsectpunct}{}
\DeclareMathOperator*{\argminB}{argmin}   
\newtheorem{definition}{\bf Definition}
\def\BibTeX{{\rm B\kern-.05em{\sc i\kern-.025em b}\kern-.08em
    T\kern-.1667em\lower.7ex\hbox{E}\kern-.125emX}}
\newtheorem{theorem}{Theorem}
\begin{document}
%
\title{Utility-aware Privacy-preserving Data Releasing}
%
%
%

\author{Di~Zhuang,~\IEEEmembership{Student~Member,~IEEE,}
        and~J.~Morris~Chang,~\IEEEmembership{Senior~Member,~IEEE}
\thanks{The authors are with the Department of Electrical Engineering, University of South Florida, Tampa, FL 33620 USA (e-mail: dizhuang@mail.usf.edu; chang5@usf.edu).}
 }

\maketitle

\begin{abstract}
In the big data era, more and more cloud-based data-driven applications are developed that leverage individual data to provide certain valuable services (the utilities). On the other hand, since the same set of individual data could be utilized to infer the individual's certain sensitive information, it creates new channels to snoop the individual's privacy. Hence it is of great importance to develop techniques that enable the data owners to release privatized data, that can still be utilized for certain premised intended purpose. Existing data releasing approaches, however, are either privacy-emphasized (no consideration on utility) or utility-driven (no guarantees on privacy). In this work, we propose a two-step perturbation-based utility-aware privacy-preserving data releasing framework. First, certain predefined privacy and utility problems are learned from the public domain data (background knowledge). Later, our approach leverages the learned knowledge to precisely perturb the data owners' data into privatized data that can be successfully utilized for certain intended purpose (learning to succeed), without jeopardizing certain predefined privacy (training to fail). Extensive experiments have been conducted on Human Activity Recognition, Census Income and Bank Marketing datasets to demonstrate the effectiveness and practicality of our framework.
\end{abstract}

\begin{IEEEkeywords}
IEEE, IEEEtran, journal, \LaTeX, paper, template.
\end{IEEEkeywords}

%
\IEEEpeerreviewmaketitle

\section{Introduction} \label{sec:Introduction}
\IEEEPARstart{A}{s} the advent and advance of cloud computing and data science in this big data era, more and more cloud-based data-driven applications are developed by different service providers (the data users, such as Facebook, LinkedIn and Google). Most of these applications leverage the vast amount of data collected from each individual (the data owner) to offer certain valuable service back to the corresponding individual or for the other political and commercial purposes, such as friend recommendation, human activity recognition, health monitoring, targeted advertising and election prediction. However, the same set of data could be repurposed in different ways to infer certain sensitive personal information, which would jeopardize the individual's privacy.



In the recent Facebook data leak scandal (April, 2018) \cite{FacebookCambridge2018}, about 87 million Facebook users' data were collected by a Facebook quiz app (a cloud-based data-driven application) and then paired with information taken from their social media profile (including their gender, age, relationship status, location and ``likes'') without any privacy-preserving operations being taken other than anonymization. Thus, the data user or the other adversaries that have the access to the data can still infer certain sensitive information of each individual from his/her data, such as identity, sexual orientation and marital status.
The unprecedented data leak scandal raised the alarm of privacy concerns among cloud-based data-driven applications which could became a big obstacle that impedes the individuals from releasing their data to the service providers to receive valuable service (the utilities).

A similar situation could happen in the patient-hospital scenario as shown in Fig.~\ref{fig:example}. Patient Alice (the data owner) would like to release her data to hospital Bob (the data user) with the premise of using it for disease A diagnosis. However, people like Eve (could be Bob), who works in the same hospital and has the access to Alice's data, could use the same data to infer certain irrelevant sensitive information about Alice, such as her disease B diagnosis.
In this case, some individuals (e.g., Facebook users or Alice) would like to release their data to receive good utilities, while on the premise that the service providers are prevented from inferring certain sensitive information from their data (e.g., identity, sexual orientation and marital status).
Therefore, it is of vital importance to develop a utility-aware privacy-preserving data releasing framework for cloud-based data-driven applications, which enables the released data to be utilized for certain premised intended purpose (utility target), without jeopardizing the corresponding data owner's certain privacy target.

\begin{figure}[tb]
\centering
\includegraphics[width=240pt]{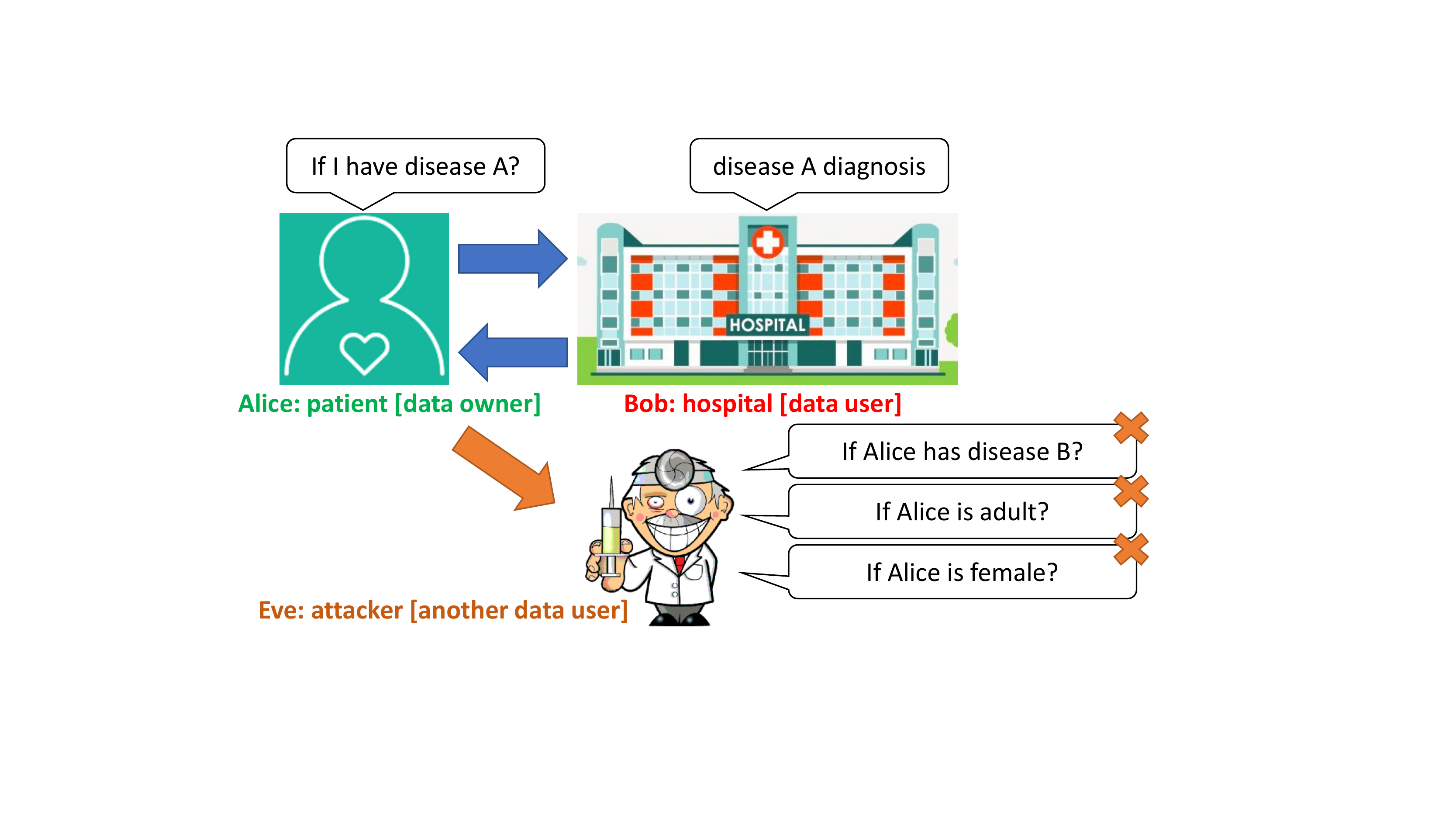}
\caption{An example in patient-hospital scenario.}
\label{fig:example}
\end{figure}

Designing such general utility-aware privacy-preserving data releasing framework is rather challenging. To date, a few related approaches have been proposed \cite{sweeney2002k, kim2003multiplicative, zhang2012functional, shokri2015privacy, abadi2016deep, kung2015discriminant, diamantaras2016data, kung2017compressive, zhuang2017fripal}. However, these approaches cannot fulfil all the privacy requirements needed in the cloud-based data-driven application scenario. For example, approaches that relied on additive or multiplicative random noise perturbation \cite{kim2003multiplicative} and k-anonymity \cite{sweeney2002k} cannot handle the curse of dimensionality.
Differential privacy machine learning approaches \cite{zhang2012functional, shokri2015privacy, abadi2016deep} have been proposed to publish machine learning models while preserving the training data privacy.
In this paper, however, we consider the scenario that the machine learning models have been trained in advance by the cloud-based service providers (the data users). The data to be protected would appear as the testing data, which is beyond the scope of those approaches. Besides, \cite{hitaj2017deep} has shown that some record-level differential privacy approaches applied to collaborative learning scenario are ineffective in dealing with inference attacks. Dimensionality reduction based approaches \cite{kung2015discriminant, diamantaras2016data, kung2017compressive, zhuang2017fripal, al2017ratio} have been proposed to preserve privacy while maintaining most of the utility. However, despite of their good experimental performance on several public datasets, those approaches didn't introduce any uncertainty to hide the sensitive information, which failed to show the needed guarantees on the privacy targets mathematically.

To address the challenges mentioned above, in this paper, we devote to solve a two-party exemplar problem. The data user (i.e., the cloud-based service provider) use his/her domain knowledge and public domain data to train a model to provide certain service in advance. The data owner would like to receive the service via sharing his/her own data as the testing data to the data user. The data owner predefines several privacy targets (sensitive information) that he/she would like to prevent the data user from inferring from his/her data. By ``predefines'', we assume that the data owner knows what the malicious inference and the corresponding domain knowledge and public domain data will be utilized by the malicious data users.

In this paper, a two-step perturbation-based utility-aware privacy-preserving data releasing framework is proposed to tackle this problem.
Given certain specific utility/privacy targets (i.e., the inference problems and the corresponding domain knowledge and public domain data), our approach precisely transforms the original data into privatized data that can be successfully utilized for certain intended purpose (learning to succeed), without jeopardizing certain predefined privacy (training to fail).
The first step is a coarse-grained data perturbation method, Joint Utility/Privacy Analysis (JUPA). JUPA is an subspace-optimized projection method, which combines the advantages from both DCA \cite{kung2015discriminant} (utility driven projection) and MDR \cite{diamantaras2016data} (privacy emphasized projection), and tries to find a subspace projection that could optimize for both utility and privacy targets with the knowledge learned from the public datasets.
The second step is a fine-grained data perturbation method inspired by the ``label changing'' problems (e.g., adversarial image perturbation \cite{szegedy2013intriguing, Goodfellow2014, gardner2015deep, papernot2016limitations, papernot2017practical, carlini2017towards, athalye2017synthesizing}) in the computer vision area, where in order to change the image's class membership, very small perturbations are added to the image that remain quasi-imperceptible to a human vision system. For instance, \cite{gardner2015deep} proposed a Maximum Mean Discrepancy \cite{fortet1953convergence} (MMD) statistic test related approach to make semantic change to the appearance of given images.
We propose to use a MMD-like loss function to leverage the knowledge (i.e., the discriminant distance among the classes in each privacy target) learned from the public domain dataset and precisely perturb each coarse-grain-perturbed data to a fine-grain-perturbed data that belongs to a randomly selected privacy target class (the data owner's secret parameter).

In the experiments, we have tested our frame on three public datasets: Human Activity Recognition, Census Income and Bank Marketing datasets. The experiment results demonstrate that (a) JUPA is a more general utility-aware dimensionality reduction approach compared with DCA \cite{kung2015discriminant} and MDR \cite{diamantaras2016data}; (b) given certain predefined privacy target, our fine-grained data perturbation approach can reduce the accuracy of the corresponding inference attack to the level of random guessing.

The rest of paper is organized as follows:
Section~\ref{sec:RelatedWork} presents the related works.
Section~\ref{sec:Preliminaries} presents the preliminaries about dimensionality reduction and maximum mean discrepancy.
Section~\ref{sec:Framework} describes our proposed utility-aware privacy-preserving data releasing framework.
Section~\ref{sec:ExperimentalEvaluation}  presents the experimental evaluation.
Section~\ref{sec:Conclusion} presents the conclusion and future work.

\section{Related Work} \label{sec:RelatedWork}
A few privacy-preserving data releasing approaches have been proposed, including solutions based on cryptography \cite{erkin2012generating, nikolaenko2013privacy, bost2015machine, bonawitz2017practical}, differentially private synthetic data generation \cite{hardt2012simple, jiang2013differential, bindschaedler2017plausible, soria2017differentially}, and dimensionality reduction \cite{liu2006random, kung2015discriminant, diamantaras2016data, kung2017compressive, zhuang2017fripal, al2017ratio}. Most of the cryptography-based approaches are designed for specific applications/algorithms. For instance, \cite{nikolaenko2013privacy} developed a privacy-preserving ridge regression system that utilized additive homomorphic encryption and Garbled circuits to train a ridge regression model with the encrypted data statistic shares submitted by multiple data owners. \cite{bost2015machine} proposed to use cryptographic building blocks to enable testing new samples while protecting both the ML model and the submitted samples, in three popular classification protocols: hyperplane decision, Naïve Bayes, and decision trees. Although cryptography-based approaches prevent the adversaries from performing inference attack on the encrypted data/model, they are not flexible enough to work for general data releasing purpose.

Differential privacy (DP) \cite{dwork2008differential} is one of the most popular standard for quantifying individual privacy. DP aims to protect the privacy of individuals via adding randomness to the aggregate information. Differentially private synthetic data generation approaches utilize those differentially private aggregate information to generate synthetic data. For instance, \cite{jiang2013differential} considers to use differential privacy component analysis for data releasing. ``Plausible Deniability'' \cite{bindschaedler2017plausible}, has been proposed and achieved by applying a privacy test after generating the synthetic data. The generative model proposed in \cite{bindschaedler2017plausible} is a probabilistic model which captures the joint distribution of features based on correlation-based feature selection (CFS) \cite{hall1999correlation}. \cite{hardt2012simple} proposed an algorithm which combines the multiplicative weights approach and exponential mechanism for differentially private data release. \cite{soria2017differentially} proposed a micro-aggregation \cite{domingo2005ordinal} based differential private data releasing approach which reduces the noise required by differential privacy based on $k$-anonymity. Although DP-based approaches provide strong guarantees on individuals' privacy, they does not take any utility targets into account in designing their privacy-preserving data releasing mechanisms.

The dimensionality reduction approaches provide a promising way to irreversibly transform the original data, and publish the transformed data for general usage. \cite{liu2006random} proposed to use random projection matrix to project the original data to a lower dimensional space. However, the random projection method mainly focuses on the privacy targets without considering the utility targets, which downgrades its utility performance. A few dimensionality reduction based privacy-preserving approaches focusing on maintaining the utility have been proposed \cite{kung2015discriminant, diamantaras2016data, kung2017compressive, zhuang2017fripal, al2017ratio}. For instance, \cite{kung2015discriminant} proposed to use Discriminant Component Analysis (DCA), a supervised version of Principle Component Analysis (PCA), to project the data into a lower dimensional space that maximizes the discriminant power for specific targets. However, since DCA mainly focuses on the utility target, it might maintain the utility while somewhat preserve the privacy because of the information loss through the dimensionality reduction. However, DCA could not control or adjust the projection matrix in terms of the privacy target. \cite{diamantaras2016data} proposed Multi-class Discriminant Ratio (MDR), which projects the data based on a pair of classification targets, (a) a privacy-insensitive and (b) a privacy-sensitive target. RUCA \cite{al2017ratio}, improves the MDR to provide more flexibility to adjust the trade-off or tuning between utility and privacy. However. these approaches do not introduce any uncertainty/randomness to hide the sensitive information, which failed to show the needed guarantees on the privacy targets mathematically.

\section{Preliminaries} \label{sec:Preliminaries}
\subsection{Dimensionality Reduction via Eigenvalue Decomposition} \label{sec:SPED}
An important component of our framework is supervised dimensionality reduction technique (i.e., it relies on data labels). Consider a dataset with $N$ training samples $\{x_{1}, x_{2}, \dots, x_{N}\}$, where each sample has $M$ features ($x_{i} \in \mathbb{R}^{M}$). Since the same dataset could be utilized in different classification problems, each classification problem $c$ has a unique set of labels $L^{c}_{i}$ associated with the corresponding training samples $x_{i}$. Without loss of generality, we assume the dataset could be utilized for a single utility target $U$ and a single privacy target $P$. Then, each training sample $x_{i}$ has two labels $L^{u}_{i} \in \{1, 2, \dots, L^{u}\}$ and $L^{p}_{i} \in \{1, 2, \dots, L^{p}\}$. $L^{u}$ and $L^{p}$ are the numbers of classes of the utility target and the privacy target, respectively.

Based on Fisher's linear discriminant analysis \cite{fisher1936use, mika1999fisher}, given a classification problem, the within-class scatter matrix of its training samples contains most of the ``noise information'', while the between-class scatter matrix of its training samples contains most of the ``signal information''.

We define the within-class scatter matrix and the between-class scatter matrix for the utility target as follows:
\begin{equation} \label{eq:S_U1}
            S_{W_{U}} = \sum_{l=1}^{L^{u}} \bigg( \sum_{i=1}^{N_{l}^{u}} x_{i}x_{i}^{T}- N_{l}^{u} \mu_{l}\mu_{l}^{T} \bigg)
\end{equation}
\begin{equation} \label{eq:S_U2}
            S_{B_{U}} = \sum_{l=1}^{L^{u}} N_{l}^{u}\mu_{l}\mu_{l}^{T} - N\mu\mu^{T}
\end{equation}
where $\mu=\frac{1}{N}\sum_{i=1}^{N} x_{i}$, $\mu_{l}$ is the mean vector of all training samples belonging to class $l$, $N_{l}^{u}$ is the number of training samples belonging to class $l$ of the utility target.

Similarly, for the privacy target the within-class scatter matrix and the between-class scatter matrix define as:
\begin{equation} \label{eq:S_P1}
    \begin{split}
            S_{W_{P}} = \sum_{l=1}^{L^{p}} \bigg( \sum_{i=1}^{N_{l}^{p}} x_{i}x_{i}^{T}- N_{l}^{p} \mu_{l}\mu_{l}^{T} \bigg)
    \end{split}
\end{equation}
\begin{equation} \label{eq:S_P2}
    \begin{split}
            S_{B_{P}} &= \sum_{l=1}^{L^{p}} N_{l}^{p}\mu_{l}\mu_{l}^{T} - N\mu\mu^{T}
    \end{split}
\end{equation}

Let $W$ be an $K \times M$ projection matrix, in which $K < M$. Given testing sample $x$, $\hat{x}=x^{T} \cdot W$ is its subspace projection. Our framework combines the
advantages of two eigenvalue decomposition based dimensionality reduction techniques: DCA \cite{kung2015discriminant} (utility driven projection) and MDR \cite{diamantaras2016data} (privacy emphasized projection).

\subsubsection{Discriminant Component Analysis (DCA)} \label{sec:DCA}
DCA \cite{kung2015discriminant} involves searching for the projection matrix $W \in R^{M \times K}$:
\begin{equation} \label{eq:DCA_1}
    \begin{split}
            DCA=\frac{det(W^{T} S_{B_{U}} W)}{det(W^{T} (\bar{S} + \rho I) W)}
    \end{split}
\end{equation}
where $det(\cdot)$ is the determinant operator, $\rho I$ is a small regularization term added for numerical stability, and $\bar{S}=S_{W_{U}}+S_{B_{U}}= \sum_{i=1}^{N} x_{i}x_{i}^{T} - N\mu\mu^{T}$.

The optimal solution to this problem can be derived from the first $K$ principal generalized eigenvectors of the matrix pencil $(S_{B_{U}}, \bar{S} + \rho I)$.
\subsubsection{Multi-class Discriminant Ratio (MDR)} \label{sec:MDR}
MDR \cite{diamantaras2016data} considers both the utility target and the privacy target, which is defined as:
\begin{equation} \label{eq:MDR}
    \begin{split}
            MDR=\frac{det(W^{T} (S_{B_{U}}) W)}{det(W^{T} (S_{B_{P}} + \rho I) W)}
    \end{split}
\end{equation}
where $\rho I$ is a small regularization term added for numerical stability.

The optimal solution to MDR can be derived from the first $K$ principal generalized eigenvectors of the matrix pencil $(S_{B_{U}}, S_{B_{P}} + \rho I)$.

\subsection{Maximum Mean Discrepancy (MMD)} \label{sec:MMD}
The Maximum Mean Discrepancy \cite{fortet1953convergence} (MMD) statistic has been proposed to test whether two distributions $p$ and $q$ are different based on the samples drawn from each of them. In this work, our fine-grained data perturbation utilized a MMD-like loss function inspired by a kernel-MMD solution \cite{gretton2007kernel}. Let $p$ and $q$ be two distributions defined on a domain $\mathcal{X}$. Given observations $X := \{x_1, x_2, \dots, x_m\}$ and $Y := \{y_1, y_2, \dots, y_n\}$, drawn i.i.d. from $p$ and $q$ respectively, the kernel-MMD solution \cite{gretton2007kernel} is defined as:
\begin{equation} \label{eq:KMMD}
    \begin{split}
        MMD[\mathcal{F}, X, Y] =& \frac{1}{m}\sum_{i=1}^{m} \phi(x_{i}) - \frac{1}{n}\sum_{i=1}^{n} \phi(y_{i}) \\
        =&\Big[\frac{1}{m^{2}}\sum_{i, j=1}^{m} k(x_i, x_j) - \frac{2}{mn}\sum_{i, j=1}^{m, n} k(x_i, y_j) \\
        &+ \frac{1}{n^{2}}\sum_{i, j=1}^{n} k(y_i, y_j)\Big]^{\frac{1}{2}}
    \end{split}
\end{equation}
where $\mathcal{F}$ is a unit ball in a universal RKHS $\mathcal{H}$, defined on the compact metric space $\mathcal{X}$, with associated kernel $k(\cdot, \cdot)$, and $\phi(x)=k(x, \cdot)$. $MMD[\mathcal{F}, X, Y] \approx 0$, if and only if $p=q$.

\begin{figure*}[tb]
\centering
\includegraphics[width=525pt]{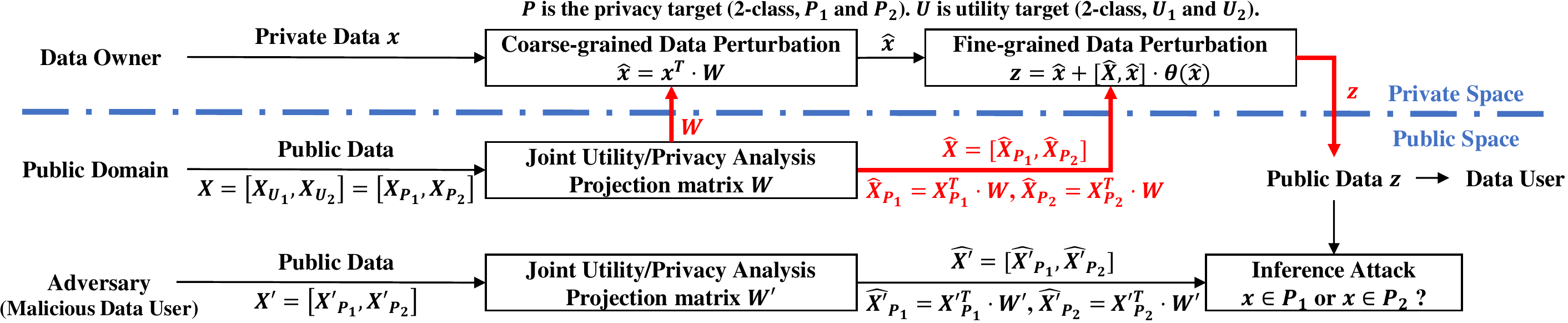}
\caption{A Utility-aware Privacy-preserving Data Releasing Framework.}
\label{fig:sysview}
\end{figure*}

\section{Utility-aware Privacy-preserving Data Releasing Framework} \label{sec:Framework}
\subsection{Framework Overview} \label{sec:ProblemStatement}
{\bf \textit{Problem Statement.}} As illustrated in Fig.~\ref{fig:sysview}, our framework involves two parties: the data owner(s) and the data user(s). The data user uses public data (background knowledge) to train a machine learning model (i.e., classification model) in advance to provide certain service (the utility targets). The data owner would like to release her private data to the data user for the purpose of the utility targets, and prevent the malicious data user from inferring certain predefined sensitive information (the privacy targets). Assume the data owner and the data user have access to similar set of public data (background knowledge) utilized for both utility and privacy targets, but the data owner does't know the data user's machine learning model. Our goal is to perturb the data owner's private data $x$ into perturbed data $z$ with the knowledge of predefined utility and privacy targets, such that the perturbed data $z$ could be utilized successfully for the intended purposes (i.e., the utility target achieves similar accuracies using either $x$ or $z$), without jeopardizing the data owner's privacy (i.e., the privacy target get no better accuracy than random guessing while using $z$). To achieve this goal, we propose a two-step data perturbation framework (Fig.~\ref{fig:sysview}).
More details about the two steps are described in Section~\ref{sec:CoarsePerturbation} and Section~\ref{sec:FinePerturbation}.

{\bf \textit{Threat Model.}} The adversaries in our framework are the malicious data users, who have the access to the public data that could be utilized as the training data for certain predefined privacy target. The adversaries would like to infer the knowledge (i.e., class) of the privacy target (i.e., classification problems) associated with the data owner's private data based on the corresponding perturbed data and public data (background knowledge).
For instance, as shown in Fig.~\ref{fig:sysview}, we shall assume that the predefined privacy target is a two-class (i.e., $\{P_1, P_2\}$) classification problem (utility targets are independent from the privacy task). Let $X=[X_{P_1}, X_{P_2}]$ be the public training samples for the privacy target, where $X_{P_i}$ ($i \in \{1, 2\}$) presents the samples associated with class $i$. Let $x$ be the data owner's private data, where $s \in \{P_1, P_2\}$ is its original (privacy target) class and $t \in \{P_1, P_2\}$ is its expected (privacy target) class after our privacy-preserving data releasing operation. The data owner could publish $z$ (i.e., the perturbed version of $x$) using our framework $F$: $z \leftarrow F(x, t, X_{P_1}, X_{P_2})$. The adversary has to use his/her approach $A(z, X_{P_1}, X_{P_2})$ to guess/infer the original (privacy task) class $s$.


\subsection{Coarse-grained Data Perturbation} \label{sec:CoarsePerturbation}
In this section, we introduce a general dimensionality reduction method Joint Utility/Privacy Analysis (JUPA). JUPA combines the advantages from both DCA \cite{kung2015discriminant} (utility driven projection) and MDR \cite{diamantaras2016data} (privacy emphasized projection), and tries to find a subspace projection that could optimize for both utility and privacy targets with the knowledge learned from the public datasets. Our problem settings are exactly the same as described in Section~\ref{sec:SPED}. For simplicity, we shall start from a single utility/privacy target scenario. JUPA tries to find a projection matrix $W$ that maximize the following function:
\begin{equation} \label{eq:JUPA}
    \begin{split}
            JUPA=\frac{det(W^{T} (S_{B_{U}} + \rho^{\prime}_{1}S_{W_{P}}) W)}{det(W^{T} (S_{W_{U}}+\rho_{1}S_{B_{P}} + \rho_{0}I) W)}
    \end{split}
\end{equation}
where $det(\cdot)$ is the determinant operator, $\rho_{0}$ is regularization parameter added for numerical stability, and $\rho_{1}$, $\rho^{\prime}_{1}$ are privacy-utility adjustment parameters.

The optimal solution to JUPA can be derived from the first $K$ principal generalized eigenvectors of the matrix pencil $(S_{B_{U}} + \rho^{\prime}_{1}S_{W_{P}}, S_{W_{U}}+\rho_{1}S_{B_{P}} + \rho_{0}I)$. After getting the projection matrix $W$, we perturb the data owner's private data $x$ and the training data matrix $X$ as $\hat{x}=x^{T} W$ and $\hat{X}=X^{T} W$.

Additionally, JUPA can be generalized to multiple utility/privacy targets by including multiple corresponding scatter matrices:
\begin{equation} \label{eq:JUPA_multiple}
    \begin{split}
            JUPA=\frac{ det(W^{T}( \sum_{i=1}^{N_u}S_{B_{U_{i}}} + \sum_{i=1}^{N_p}\rho^{\prime}_{i}S_{W_{P_{i}}}) W)}{det(W^{T} (\sum_{i=1}^{N_u}S_{W_{U_{i}}}+\sum_{i=1}^{N_p}\rho_{i}S_{B_{P_{i}}} + \rho_{0}I) W)}
    \end{split}
\end{equation}

{\bf \textit{Utility vs. ``Somewhat Privacy''.}} JUPA maintains a trade-off between the utility and ``somewhat privacy''. ``somewhat privacy'' means our coarse-grained perturbation approach optimizes towards privacy, but could not provide privacy guarantee (as in Section~\ref{sec:FinePerturbation}). On one hand, JUPA optimizes a subspace projection that maximizes the ``signal to noise'' ratio of the utility targets. On the other hand, JUPA optimizes towards two ``mappings'' for privacy targets: a ``many-to-one'' mapping, after which data belonging to the same privacy class are near each other (tuned by $\rho^{\prime}_{1}$); and a ``one-to-many'' mapping, after which data belonging to different privacy classes are far from each other (tuned by $\rho_{1}$). By adjusting $\rho_{1}$ and $\rho^{\prime}_{1}$, JUPA could be tuned between DCA \cite{kung2015discriminant}, MDR \cite{diamantaras2016data} and RUCA \cite{al2017ratio}. For instance, if $\rho_{1}=\rho^{\prime}_{1}=0$, this projection method becomes DCA; if $\rho_{1}$ is very large and $\rho^{\prime}_{1}=0$, it becomes MDR as the term $S_{B_{P}}$ dominates $(S_{W_{U}}+\rho_{1}S_{B_{P}} + \rho_{0}I)$; and if $\rho^{\prime}_{1}=0$ it becomes RUCA. Higher value of $\rho_{1}$ and $\rho^{\prime}_{1}$ means more emphasis on the privacy targets.

\subsection{Fine-grained Data Perturbation} \label{sec:FinePerturbation}
In this section, we introduce a perturbation approach that gradually change the privacy target classification label of a given data owner's coarse-grained perturbed data $\hat{x}$ from its source (original) label $s$ to a randomly selected target label $t$, via adding precisely calculated noise. For simplicity, we shall assume a single 2-class ($\{P_1, P_2\}$) privacy target scenario. Except for the data owner's coarse-grained perturbed data $\hat{x}$, another input for this approach is the coarse-grained perturbed training data matrix $\hat{X}=[\hat{X}_{P_{1}}, \hat{X}_{P_{2}}]$, where $\hat{X}$ will be split into two parts: $\hat{X}^{G}=[\hat{X}^{G}_{P_{1}}, \hat{X}^{G}_{P_{2}}]$ and $\hat{X}^{V}=[\hat{X}^{V}_{P_{1}}, \hat{X}^{V}_{P_{2}}]$. $\hat{X}^{G}$ is the ``ground truth'' training data matrix being used to gradually ``train'' the fine-grained perturbed data. $\hat{X}^{V}$ is the ``verification'' training data matrix being used to verify the current label of the input data $\hat{x}$ and intermediate perturbed data.

Given coarse-perturbed private data $\hat{x}$, we start from randomly selecting a target label $t \in \{P_1, P_2\}$ for $\hat{x}$, and use the following function to decide its current (source) label $s \in \{P_1, P_2\}$:
\begin{equation} \label{eq:sanitization}
    \begin{split}
        s=&label(\hat{x})=\argminB_{\{l : l \in \{P_{1}, P_{2}\}\}} \big(\frac{1}{|\hat{X}^{V}_{l}|}\sum_{\hat{x}_{i} \in \hat{X}^{V}_{l}} \phi(\hat{x}_{i}) - \phi(\hat{x}) \big)^{2}\\
        =&\argminB_{\{l : l \in \{P_{1}, P_{2}\}\}} \frac{1}{|\hat{X}^{V}_{l}|^{2}}\sum_{\hat{x}_{i}, \hat{x}_{j} \in \hat{X}^{V}_{l}} k(\hat{x}_{i}, \hat{x}_{j}) \\
        &- \frac{2}{|\hat{X}^{V}_{l}|}\sum_{\hat{x}_{i} \in \hat{X}^{V}_{l}} k(\hat{x}_{i}, \hat{x}) + k(\hat{x}, \hat{x}) \\
        =&\argminB_{\{l : l \in \{P_{1}, P_{2}\}\}} \frac{1}{|\hat{X}^{V}_{l}|^{2}}\sum_{\hat{x}_{i}, \hat{x}_{j} \in \hat{X}^{V}_{l}} k(\hat{x}_{i}, \hat{x}_{j}) \\
        &- \frac{2}{|\hat{X}^{V}_{l}|}\sum_{\hat{x}_{i} \in \hat{X}^{V}_{l}} k(\hat{x}_{i}, \hat{x})
    \end{split}
\end{equation}

Our approach perturbs $\hat{x}$ in an iterative fashion. Let $z_{i}$ be the $i$th ($i=1, 2, \dots$) intermediate perturbed data. Then, our iterative data sanitization function is defined as:
\begin{equation} \label{eq:sanitization}
    \begin{split}
        z_{0}=&\hat{x} \\
        z_{i}=&z_{i-1}+\theta(z_{i-1}) \ \ (i=1, \ 2, \ \dots)
    \end{split}
\end{equation}
where $\theta(z_{i})$ is the noise vector being added to $z_{i}$. The starting noise vector $\theta(z_{0})$ could be initiated as a zero vector or a random vector.

In order to compute $\theta(z_{i})$ in each iteration, inspired by the kernel-MMD solution \cite{gretton2007kernel} described in Section~\ref{sec:MMD}, we define a loss function as:
\begin{equation} \label{eq:loss}
    \begin{split}
        L(\theta(z_{i}))=&\big(\frac{1}{n_t}\sum_{\hat{x}_{i} \in \hat{X}^{G}_{t}} \phi(\hat{x}_{i}) - \phi(z_{i}) \big)^{2} \\
        & - \big(\frac{1}{n_s}\sum_{\hat{x}_{i} \in \hat{X}^{G}_{s}} \phi(\hat{x}_{i}) - \phi(z_{i}) \big)^{2} + \frac{\lambda}{2} \|\theta(z_{i})\|_{2}^{2} \\
        =&\frac{1}{n_t^{2}}\sum_{\hat{x}_{i}, \hat{x}_{j} \in \hat{X}^{G}_{t}} k(\hat{x}_{i}, \hat{x}_{j}) - \frac{1}{n_s^{2}}\sum_{\hat{x}_{i}, \hat{x}_{j} \in \hat{X}^{G}_{s}} k(\hat{x}_{i}, \hat{x}_{j}) \\
        &+ \frac{2}{n_s}\sum_{\hat{x}_{i} \in \hat{X}^{G}_{s}} k(\hat{x}_{i}, z_{i}) - \frac{2}{n_t}\sum_{\hat{x}_{i} \in \hat{X}^{G}_{t}} k(\hat{x}_{i}, z_{i}) \\
        &+ \frac{\lambda}{2} \|\theta(z_{i})\|_{2}^{2}
    \end{split}
\end{equation}

A large negative value of $L(\theta(z_{i}))$ indicates $z_{i}$ belongs to the target class, and a large positive value of $L(\theta(z_{i}))$ indicates $z_{i}$ belongs to the source class. Therefore, the value of $\theta(z_{i})$ is obtain by minimizing the loss function $L(\theta(z_{i}))$ gradually, until $label(z_{i})$ is $t$. To solve this optimization problem, we use a gradient descent approach:
\begin{equation} \label{eq:gd}
    \begin{split}
        \bigtriangledown_{\theta(z_{i})} L(\theta(z_{i}))=&\frac{1}{n_s}\sum_{\hat{x}_{i} \in \hat{X}^{G}_{s}} k(\hat{x}_{i}, z_{i}) \frac{\hat{x}_i-z_{i}}{\sigma^2} \\
        &- \frac{1}{n_t}\sum_{\hat{x}_{i} \in \hat{X}^{G}_{t}} k(\hat{x}_{i}, z_{i})\frac{\hat{x}_i-z_{i}}{\sigma^2} \\
        &+ \lambda \cdot \theta(z_{i})
    \end{split}
\end{equation}

\begin{equation} \label{eq:gd}
    \begin{split}
        \theta(z_{i})=\theta(z_{i})-\alpha \bigtriangledown_{\theta(z_{i})} L(\theta(z_{i}))
    \end{split}
\end{equation}

where we use RBF kernel $k(x_i, x_j)=e^{-\frac{\|x_i-x_j\|_2^2}{2\sigma^2}}$ as an example, and $\alpha$ is the learning rate. Finding the most appropriate kernel function is beyond the scope of this paper, and there are a few papers discussing about kernel selection \cite{jebara2004multi, kim2006optimal}. In the experimental evaluation, we use the kernel function that gives the highest cross-validation accuracy on the training data.

{\bf \textit{Privacy Guarantee.}} Considering the ``two-class'' scenario described in Section~\ref{sec:ProblemStatement}, we assume the adversaries' approach $A(z, X_{P_1}, X_{P_2})$ would be certain kernel-based classification models trained by public available dataset $[X_{P_1}, X_{P_2}]$. Inspired by semantic security \cite{goldreich2009foundations}, we give our definition of a privacy-preserving data releasing framework as below.

\begin{definition} {\bf (Privacy-preserving Data Releasing Framework.)} $F$ is a privacy-preserving data releasing framework, if given predefined privacy target and certain adversaries' approach $A$, the advantage $Adv[A, F] = |Pr(s = P_1) - Pr(s = P_2)|$ is negligible. (It is straightforward to generate this definition to multi-class scenarios.)
\end{definition}

\begin{theorem}
Our proposed framework is a privacy-preserving data releasing framework.
\end{theorem}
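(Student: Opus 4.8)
The plan is to follow the semantic-security template: show that the released $z$ is, as seen by the admissible adversary, effectively a function of the secret target label $t$ alone, and then exploit the independence of $t$ from the source label $s$ to collapse the advantage. The coarse step (JUPA) is irreverent here, since it only reduces dimension and, by the paper's own admission, carries no guarantee; the entire argument rests on the fine-grained perturbation, so I treat $\hat{x}$ as the input and $z$ as the released object.

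First I would fix the probability space. The source label $s$ is determined by $x$, whereas the target label $t$ is drawn uniformly and independently from $\{P_1, P_2\}$ by the data owner, so $t \perp s$ and $\Pr(t = P_1) = \Pr(t = P_2) = \tfrac12$. The single structural fact I would extract from the construction is a termination guarantee: the iteration $z_{i} = z_{i-1} + \theta(z_{i-1})$ minimizing the loss $L(\theta(z_i))$ runs precisely until $label(z) = t$, where $label(\cdot)$ is the nearest-mean-embedding rule (the $\argminB$ display). Hence, regardless of $s$, the output satisfies $label(z) = t$; that is, $z$ is at least as close, in empirical kernel mean-embedding distance, to class $t$ as to class $s$.

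Second I would restrict attention to the admissible adversary. Per the threat model, $A(z, X_{P_1}, X_{P_2})$ is a kernel-based classifier built from the same public data and a comparable kernel $k$. The key lemma to establish is that any such classifier's decision on $z$ is a functional of the per-class similarity statistics $\tfrac{1}{|X_c|}\sum_{x_i \in X_c} k(x_i, z)$ — exactly the quantities $label(\cdot)$ compares — so that $A(z)$ agrees with $label(z) = t$ up to the sampling discrepancy between $\hat{X}^{G}$ and $\hat{X}^{V}$, which is $O(1/\sqrt{n})$ by the MMD concentration bound of \cite{gretton2007kernel}. Consequently $A(z) = t$ with all but negligible probability. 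Because the adversary's view of $z$ is then, up to negligible terms, a function of $t$ alone, and $t$ is independent of $s$, the posterior $\Pr(s = P_1 \mid z)$ stays at the prior to within a negligible term; equivalently $\Pr(A(z) = P_1 \mid s = P_1) = \Pr(A(z) = P_1 \mid s = P_2) \pm \mathrm{negl}$. Feeding this into $Adv[A, F] = |\Pr(s = P_1) - \Pr(s = P_2)|$ under either reading of the definition yields a negligible advantage, proving the claim.

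The hard part will be the leakage-control lemma of Step two. The termination guarantee only certifies the coarse, sign-level decision $label(z) = t$; it does not by itself certify that the law of $z$ given ``$s=P_1, t=P_2$'' (data \emph{moved into} $P_2$) is indistinguishable from ``$s=P_2, t=P_2$'' (data \emph{born in} $P_2$), because $z$ is formed by adding a minimal, $\ell_2$-regularized perturbation to $x$ and may retain fine-grained traces of $s$ that a classifier not confined to the mean embedding could exploit. Making the argument airtight therefore requires either (a) explicitly restricting the adversary to mean-embedding / kernel-MMD classifiers, so that only the matched statistic $\frac{1}{|X_t|}\sum_{x_i \in X_t} k(x_i, z)$ is visible and the residual coordinates are provably ignored, or (b) upper-bounding the residual MMD between the ``moved-to-$c$'' and ``born-in-$c$'' conditional distributions and showing it stays within the $O(1/\sqrt{n})$ test threshold. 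I would pursue route (a) as primary, since it matches the stated threat model, treating the regularizer $\frac{\lambda}{2}\|\theta(z_i)\|_2^2$ and the stopping rule as the levers that force the matched statistic to dominate the decision.
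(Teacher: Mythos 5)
Your proposal follows the same skeleton as the paper's proof---uniform random choice of $t$ independent of $s$, the perturbation driving $z$ into class $t$, then collapsing the advantage---but the paper's actual proof is three sentences long and stops essentially where your Step~1 ends: it asserts that because the framework ``precisely perturbs'' $x$ towards a perturbed $z$ associated with a randomly selected label $t$, one immediately has $\Pr(s=P_1)=\Pr(s=P_2)=\tfrac12$ given $z$ and $[X_{P_1},X_{P_2}]$, hence $Adv[A,F]=0$. Everything you place in Step~2 and in your ``hard part''---the restriction to kernel-based adversaries, the MMD concentration argument tying $A(z)$ to $label(z)=t$, and above all the leakage-control lemma distinguishing the law of ``moved-into-$P_2$'' points from that of ``born-in-$P_2$'' points---is simply absent from the paper. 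In other words, the gap you flag is not a deficiency of your write-up relative to the paper; it is precisely the gap in the paper's own argument, which you have correctly isolated. The paper implicitly takes your route~(a): its threat model restricts $A$ to ``certain kernel-based classification models,'' and its proof tacitly assumes that such a classifier's view of $z$ reduces to post-perturbation class membership, i.e., that the matched mean-embedding statistic is the adversary's whole view. That is exactly your unproven key lemma, and your diagnosis of why it is nontrivial is sound: the stopping rule certifies only the sign-level fact $label(z)=t$ with respect to $\hat{X}^{V}$, while the $\ell_2$ regularizer $\frac{\lambda}{2}\|\theta(z_i)\|_2^2$ actively biases $z$ toward retaining traces of $x$ (and hence of $s$) that an unrestricted classifier could exploit. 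You also rightly note the ambiguity that $Adv[A,F]=|\Pr(s=P_1)-\Pr(s=P_2)|$ as literally defined does not reference $A$ or $z$; the paper silently reads the probabilities as posteriors. So your proposal is, if anything, the more honest account: stated with route~(a) as an explicit hypothesis (adversary $=$ mean-embedding classifier under a comparable kernel), it closes and subsumes the paper's argument; without that hypothesis, neither your proof nor the paper's goes through.
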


\begin{proof}
Given predefined privacy target, certain appropriate kernel function and public available dataset $[X_{P_1}, X_{P_2}]$, our framework precisely perturbs the private data $x$ towards a perturbed data $z$ associated with a randomly selected privacy target label $t$. Then, given $z$ and $[X_{P_1}, X_{P_2}]$, we have $Pr(s = P_1)=Pr(s = P_2)=\frac{1}{2}$. Therefore, $Adv[A, F] = |Pr(s = P_1) - Pr(s = P_2)|=0$ is negligible.
\end{proof}

\section{Experimental Evaluation} \label{sec:ExperimentalEvaluation}
\subsection{Datasets} \label{sec:ExperimentalEvaluationDatasets}
We have tested our proposed frame with three public datasets: Human Activity Recognition (HAR) \cite{anguita2013public}, Census Income (Census) \cite{kohavi1996scaling} and Bank Marketing (Bank) \cite{moro2014data}. Each dataset has been split into three subsets: training samples (for perturbation approaches), testing samples (data owner's private data), and adversary training samples (for inference attacks).

{\bf Human Activity Recognition (HAR) \cite{anguita2013public}} \textit{HAR} contains smartphone sensor data (i.e., accelerometer data) of 30 subjects' daily activities, where each sample has 561 features and two labels: activities of daily living (ADL) and identity (ID).
In our experiments, we consider ADL as the utility target and ID as the privacy target.
Specifically, ADL has 6 types of labels: ``Walking'', ``Walking Upstairs'', ``Walking Downstairs'', ``Sitting'', ``Standing'' and ``Laying''. On the other hand, ID has 30 types of labels, since 30 subjects have contributed to this dataset. The original dataset is unbalanced. For instance, some subjects contribute more data than the others and some ADLs happen more often than the others. As such, for each different ADL-ID combination ($6 \times 30 = 180$ combinations in total), we randomly sampled 20 samples from the original dataset, resulting in 3,600 samples. The numbers of training, testing and adversary training samples are 1,440, 720 and 1,440, respectively. We kept the number of samples in all ADL-ID combinations equal in all sets.

{\bf Census Income (Census) \cite{kohavi1996scaling}} \textit{Census} has been used to predict whether someone's income exceeds \$50K/yr based on census data. We identify two labels of this dataset: ``income'', where the data user tries to classify if someone's income is ``high'' (higher than \$50K/yr) or ``low'' (lower or equal to \$50K/yr); and ``gender'' (i.e., male/female) which was one feature in the original dataset. Since based on the application, either ``income'' or ``gender'' can be served as utility or privacy targets, we experimented for both cases.
Firstly, we removed the samples with missing features. Secondly, we turned all categorical features into numerical features using binary encoding, which resulted in 51 features. Lastly, we randomly sampled 750 samples for each income-gender combination ($2 \times 2 = 4$ combinations in total) from the original dataset, resulting in 3,000 samples. The numbers of training, testing and adversary training samples are 1,200, 600 and 1,200, respectively. As with the \textit{HAR} dataset, we kept the number of samples in all income-gender combinations equal in all sets.

{\bf Bank Marketing (Bank) \cite{moro2014data}} \textit{Bank} is related with direct marketing campaigns (phone calls) of a Portuguese banking institution. The original classification goal is to predict if the client will subscribe a term deposit (marketing purpose). As such, we used the marketing purpose (``marketing'') as the utility target, which is a ``yes'' or ``no'' binary classification problem. We used marital status (``marital'') as the privacy target, which was one feature in the original dataset. Since very few samples have ``unknown'' marital status, we removed those samples. Thus, ``marital'' has 3 types of labels: ``divorced'', ``married'' and ``single''. As with the \textit{Census}, we turned all categorical features into numerical features using binary encoding, resulting in 31 features. We randomly sampled 410 samples for each marketing-marital combination ($2 \times 3 = 6$ combinations in total) from the original dataset, resulting in 2,460 samples. The numbers of training, testing and adversary training samples are 984, 492 and 984, respectively. We also kept the number of samples in all marketing-marital combinations equal in all sets.

\subsection{Setups}
We evaluate the performance of our proposed two perturbation approaches step-by-step, in terms of utility and privacy. In all the experiments, we utilized RBF-kernel SVM to train the machine learning classifiers for both the utility and privacy targets. The utility classifier is to provide certain premised valuable service, while the privacy classifier is to perform the adversaries' inference attack. All the experiments were performed 15 iterations. At each iteration, a 10-fold cross-validation grid search was performed to find the best set of parameters for training utility and privacy classifiers. As discussed in the last section (Section~\ref{sec:ExperimentalEvaluationDatasets}), we evaluate our frame using three datasets and four scenarios. Given a scenario and its dataset, the evaluation metric is the accuracy of its utility/privacy classifiers. Higher accuracy of the utility classifier means providing better utility. Lower accuracy of the privacy classifier means less privacy leakage. The baseline (i.e., lowest accuracy, no privacy leakage) of the privacy classifier should be equal to the probability of random-guess, of which the prediction is drawn i.i.d. from a uniform distribution.

For the coarse-grained perturbation, we compare our proposed JUPA with a full-dimensional baseline method and four existing dimensionality reduction methods, including Random Projection, PCA, DCA and MDR. Moreover, We evaluate JUPA with regularization parameter $\rho_{0}=0.001$, and different combinations of privacy-utility adjustment parameters $\rho_{1}=1,$ $10^{2},$ $10^{4}$, $\rho^{\prime}_{1}=1,$ $10^{2},$ $10^{4}$. For the fine-grained perturbation, we set $\lambda=0.001$, $\alpha=0.1$, and use a zero vector to initiate the starting noise vector $\theta(z_{0})$.

\subsection{Experimental Results}
Table~\ref{table:HAR}, Table~\ref{table:Census}, Table~\ref{table:Census2} and Table~\ref{table:Bank} shows the experimental results of four scenarios (three datasets), and the following are the main observations and conclusions drawn from experimental results.

\subsubsection{} Considering the coarse-grained perturbation approach alone, JUPA outperforms the other DR methods in terms of the utility and ``somewhat privacy''.
Compared with PCA and random projection, DCA, MDR and JUPA provide better balance between the utility and ``somewhat privacy'' performance, since PCA and random projection are not leveraging any help from the ``label'' information.
For instance, in Table~\ref{table:HAR}, after applying random projection (coarse-grained), the accuracy of ID (privacy) dropped from 62.78\% to 13.75\% (providing one of the best privacy performance), and the accuracy of ADL (utility) dropped from 97.22\% to 60.28\% (giving one of the worst utility performance). On the contrary, after applying PCA (coarse-grained), the accuracy of either utility or privacy does not drop much (providing less ``somewhat privacy'').
Compared with DCA and MDR, JUPA provides better utility and ``somewhat privacy'' performance under certain privacy parameters. For instance, in Table~\ref{table:HAR}, when $\rho_{1}=1$ and $\rho^{\prime}_{1}=1$, compared with other DR methods, JUPA (coarse-grained) provides the highest accuracy (96.11\%) of ADL (utility), and also the second lowest accuracy (only higher than random projection) (21.11\%) of ADL (utility). Results in the other scenarios are inline with this observation.

\subsubsection{} JUPA provides the flexibility for finding a favorable trade-off or tuning between utility and privacy by tuning the privacy parameters. Based on our results, by increasing $\rho_{1}=1$ or $\rho^{\prime}_{1}=1$, JUPA weights more emphasis on preserving privacy (providing accuracy) with small amount of accuracy drop on the utility. For instance, in Table~\ref{table:HAR}, adjusting JUPA from $\rho_{1}=1$, $\rho^{\prime}_{1}=1$ to $\rho_{1}=10^4$, $\rho^{\prime}_{1}=10^4$, results in a 42.78\% drop of the ID (privacy) accuracy (from 21.11\% to 12.08\%), while only resulting in a 8.96\% drop of the ADL (utility) accuracy (from 96.11\% to 87.50\%).

\subsubsection{} Our fine-grained perturbation approach provides the privacy guarantee. For instance, in all the scenarios, after applying the fine-grained perturbation, the accuracies of privacy targets are all converging to or near to the probability of random-guess.

\subsubsection{} In our framework, combining JUPA with the fine-grained perturbation outperforms the other options in terms of the utility. For instance, in Table~\ref{table:HAR}, compared with other DR methods, DCA, MDR and JUPA (fine-grained) provide relative higher accuracy of ADL (utility) ($\leq$ 86.11\%), and when $\rho_{1}=1$ and $\rho^{\prime}_{1}=1$, JUPA provides the best utility accuracy (94.31\%). Results in the other scenarios are inline with this observation. The reason is that even though the fine-grained perturbation could provide guarantee for privacy, applying supervised DR methods (DCA, MDR and JUPA) reserves more utility information and need less noise added to the coarse-grained perturbed data to achieve the privacy guarantee.




\begin{table}[!t]
\footnotesize
\captionsetup{font=footnotesize}
\centering\caption{The Mean Accuracy Percentage Results of Human Activity Recognition Dataset with $K=5$, ADL being the utility target, and ID being the privacy target.}
\label{table:HAR}
\centering
\begin{tabular}{l|c|c|c|c}
\hline
  \multirow{2}{*}{Projection Method} & \multicolumn{2}{c|}{ADL} & \multicolumn{2}{c}{ID}\\\cline{2-5}
    & Coarse & Fine & Coarse & Fine \\
  \hline
  Full-Dimensional & 97.22 & 66.94 & 62.78 & 3.33 \\
  \hline
  Random Projection & 60.28 & 57.36 & 13.75 & 3.33 \\
  \hline
  PCA & 84.72 & 73.33 & 30.28 & 3.75 \\
  \hline
  DCA & 94.58 & 93.75 & 23.61 & 3.33 \\
  \hline
  MDR & 91.67 & 88.75 & 22.92 & 4.58 \\
  \hline
  JUPA ($\rho_{1}=1$, $\rho^{\prime}_{1}=1$) & 96.11 & 94.31 & 21.11 & 3.75 \\
  \hline
  JUPA ($\rho_{1}=1$, $\rho^{\prime}_{1}=10^2$) & 95.83 & 93.47 & 20.28 & 3.61 \\
  \hline
  JUPA ($\rho_{1}=1$, $\rho^{\prime}_{1}=10^4$) & 95.56 & 93.47 & 19.72 & 3.33 \\
  \hline
  JUPA ($\rho_{1}=10^2$, $\rho^{\prime}_{1}=1$) & 94.44 & 93.33 & 20.00 & 3.33 \\
  \hline
  JUPA ($\rho_{1}=10^2$, $\rho^{\prime}_{1}=10^2$) & 94.17 & 92.64 & 17.78 & 3.33 \\
  \hline
  JUPA ($\rho_{1}=10^2$, $\rho^{\prime}_{1}=10^4$) & 93.75 & 92.36 & 16.67 & 3.33 \\
  \hline
  JUPA ($\rho_{1}=10^4$, $\rho^{\prime}_{1}=1$) & 92.50 & 88.19 & 13.61 & 3.33 \\
  \hline
  JUPA ($\rho_{1}=10^4$, $\rho^{\prime}_{1}=10^2$) & 89.58 & 86.39 & 12.50 & 3.33 \\
  \hline
  JUPA ($\rho_{1}=10^4$, $\rho^{\prime}_{1}=10^4$) & 87.50 & 86.11 & 12.08 & 3.33 \\
\hline
\end{tabular}
\end{table}

\begin{table}[!t]
\footnotesize
\captionsetup{font=footnotesize}
\centering\caption{The Mean Accuracy Percentage Results of Census Income Dataset with $K=1$, income being the utility target, and gender being the privacy target.}
\label{table:Census}
\centering
\begin{tabular}{l|c|c|c|c}
\hline
  \multirow{2}{*}{Projection Method} & \multicolumn{2}{c|}{income} & \multicolumn{2}{c}{gender}\\\cline{2-5}
    & Coarse & Fine & Coarse & Fine \\
  \hline
  Full-Dimensional & 84.50 & 69.76 & 87.33 & 50.00 \\
  \hline
  Random Projection & 58.33 & 50.50 & 59.17 & 50.00 \\
  \hline
  PCA & 73.33 & 70.33 & 81.67 & 50.00 \\
  \hline
  DCA & 80.00 & 73.50 & 56.00 & 50.00 \\
  \hline
  MDR & 76.67 & 68.33 & 58.00 & 50.00 \\
  \hline
  JUPA ($\rho_{1}=1$, $\rho^{\prime}_{1}=1$) & 82.50 & 75.33 & 55.50 & 50.00 \\
  \hline
  JUPA ($\rho_{1}=1$, $\rho^{\prime}_{1}=10^2$) & 80.00 & 75.16 & 54.67 & 50.00 \\
  \hline
  JUPA ($\rho_{1}=1$, $\rho^{\prime}_{1}=10^4$) & 78.33 & 74.33 & 54.67 & 50.00 \\
  \hline
  JUPA ($\rho_{1}=10^2$, $\rho^{\prime}_{1}=1$) & 79.17 & 74.66 & 55.00 & 50.00 \\
  \hline
  JUPA ($\rho_{1}=10^2$, $\rho^{\prime}_{1}=10^2$) & 77.50 & 74.00 & 54.50 & 50.00 \\
  \hline
  JUPA ($\rho_{1}=10^2$, $\rho^{\prime}_{1}=10^4$) & 76.67 & 73.83 & 54.17 & 50.00 \\
  \hline
  JUPA ($\rho_{1}=10^4$, $\rho^{\prime}_{1}=1$) & 76.00 & 73.67 & 53.17 & 50.00 \\
  \hline
  JUPA ($\rho_{1}=10^4$, $\rho^{\prime}_{1}=10^2$) & 75.00 & 73.50 & 52.67 & 50.00 \\
  \hline
  JUPA ($\rho_{1}=10^4$, $\rho^{\prime}_{1}=10^4$) & 72.00 & 66.83 & 51.17 & 50.00 \\
\hline
\end{tabular}
\end{table}

\begin{table}[!t]
\footnotesize
\captionsetup{font=footnotesize}
\centering\caption{The Mean Accuracy Percentage Results of Census Income Dataset with $K=1$, gender being the utility target, and income being the privacy target.}
\label{table:Census2}
\centering
\begin{tabular}{l|c|c|c|c}
\hline
  \multirow{2}{*}{Projection Method} & \multicolumn{2}{c|}{gender} & \multicolumn{2}{c}{income}\\\cline{2-5}
    & Coarse & Fine & Coarse & Fine \\
  \hline
  Full-Dimensional & 87.33 & 73.50 & 84.50 & 50.00 \\
  \hline
  Random Projection & 59.17 & 59.17 & 58.33 & 50.00 \\
  \hline
  PCA & 81.67 & 70.33 & 73.33 & 50.00 \\
  \hline
  DCA & 87.50 & 80.50 & 53.17 & 50.00 \\
  \hline
  MDR & 86.67 & 77.83 & 56.00 & 50.00 \\
  \hline
  JUPA ($\rho_{1}=1$, $\rho^{\prime}_{1}=1$) & 88.00 & 82.5 & 57.17 & 50.00 \\
  \hline
  JUPA ($\rho_{1}=1$, $\rho^{\prime}_{1}=10^2$) & 87.67 & 82.17 & 55.67 & 50.00 \\
  \hline
  JUPA ($\rho_{1}=1$, $\rho^{\prime}_{1}=10^4$) & 87.50 & 82.17 & 55.50 & 50.00 \\
  \hline
  JUPA ($\rho_{1}=10^2$, $\rho^{\prime}_{1}=1$) & 87.67 & 81.33 & 55.67 & 50.00 \\
  \hline
  JUPA ($\rho_{1}=10^2$, $\rho^{\prime}_{1}=10^2$) & 86.67 & 81.17 & 54.67 & 50.00 \\
  \hline
  JUPA ($\rho_{1}=10^2$, $\rho^{\prime}_{1}=10^4$) & 86.00 & 80.17 & 54.67 & 50.00 \\
  \hline
  JUPA ($\rho_{1}=10^4$, $\rho^{\prime}_{1}=1$) & 87.00 & 80.33 & 54.33 & 50.00 \\
  \hline
  JUPA ($\rho_{1}=10^4$, $\rho^{\prime}_{1}=10^2$) & 86.67 & 79.67 & 53.50 & 50.00 \\
  \hline
  JUPA ($\rho_{1}=10^4$, $\rho^{\prime}_{1}=10^4$) & 85.67 & 78.67 & 52.67 & 50.00 \\
\hline
\end{tabular}
\end{table}

\begin{table}[!t]
\footnotesize
\captionsetup{font=footnotesize}
\centering\caption{The Mean Accuracy Percentage Results of Bank Marketing Dataset with $K=1$, marketing being the utility target, and marital being the privacy target.}
\label{table:Bank}
\centering
\begin{tabular}{l|c|c|c|c}
\hline
  \multirow{2}{*}{Projection Method} & \multicolumn{2}{c|}{marketing} & \multicolumn{2}{c}{marital}\\\cline{2-5}
    & Coarse & Fine & Coarse & Fine \\
  \hline
  Full-Dimensional & 86.38 & 69.11 & 45.73 & 34.15 \\
  \hline
  Random Projection & 60.57 & 54.88 & 39.23 & 33.33 \\
  \hline
  PCA & 71.14 & 70.73 & 41.06 & 33.33 \\
  \hline
  DCA & 84.76 & 78.66 & 38.01 & 33.33 \\
  \hline
  MDR & 71.75 & 67.48 & 36.79 & 33.33 \\
  \hline
  JUPA ($\rho_{1}=1.0$, $\rho^{\prime}_{1}=1.0$) & 86.38 & 81.30 & 39.63 & 33.33 \\
  \hline
  JUPA ($\rho_{1}=1.0$, $\rho^{\prime}_{1}=10^2$) & 86.18 & 79.67 & 38.82 & 33.33 \\
  \hline
  JUPA ($\rho_{1}=1.0$, $\rho^{\prime}_{1}=10^4$) & 85.37 & 78.66 & 38.41 & 33.33 \\
  \hline
  JUPA ($\rho_{1}=10^2$, $\rho^{\prime}_{1}=1.0$) & 86.18 & 76.22 & 38.01 & 33.33 \\
  \hline
  JUPA ($\rho_{1}=10^2$, $\rho^{\prime}_{1}=10^2$) & 85.98 & 75.61 & 37.60 & 33.33 \\
  \hline
  JUPA ($\rho_{1}=10^2$, $\rho^{\prime}_{1}=10^4$) & 85.98 & 75.41 & 36.18 & 33.33 \\
  \hline
  JUPA ($\rho_{1}=10^4$, $\rho^{\prime}_{1}=1.0$) & 85.37 & 75.20 & 36.99 & 33.33 \\
  \hline
  JUPA ($\rho_{1}=10^4$, $\rho^{\prime}_{1}=10^2$) & 84.35 & 75.00 & 36.59 & 33.33 \\
  \hline
  JUPA ($\rho_{1}=10^4$, $\rho^{\prime}_{1}=10^4$) & 83.13 & 74.59 & 35.77 & 33.33 \\
\hline
\end{tabular}
\end{table}

\section{Conclusion} \label{sec:Conclusion}
In this paper, we proposed a two-step perturbation-based utility-aware privacy-preserving data releasing framework. In the first step, we proposed JUPA, a supervised DR method, which outperforms several existing DR methods in terms of providing utility and ``somewhat privacy'', and provides the flexibility for finding a favorable trade-off or tuning between utility and privacy by tuning the privacy parameters. In the second step, we proposed a fine-grained perturbation approach, which guarantees to provide the protection against inference attacks on certain predefined privacy targets. In the experimental evaluation, we deployed our frame in four scenarios using three public dataset. The experiment results are inline with our expectations and demonstrating the effectiveness and practicality of our framework. Future work will include and extension of JUPA to support non-linear sub-space projections, and an optimized kernel selection method for our fine-grained perturbation approach.


%
%
\section*{Acknowledgment}
This material is based on research sponsored by the DARPA Brandeis Program under agreement number N66001-15-C-4068.\footnote[1]{The views, opinions, and/or findings contained in this article/presentation are those of the author/presenter and should not be interpreted as representing the official views or policies, either expressed or implied, of the Defense Advanced Research Projects Agency or the Department of Defense.}

\ifCLASSOPTIONcaptionsoff
  \newpage
\fi

\bibliographystyle{IEEEtran}
\bibliography{IEEEabrv,tifs19_ppdr}
\vskip 0pt plus -1fil
\begin{IEEEbiography}[{\includegraphics[width=1.05in,height=2.0in,clip,keepaspectratio]{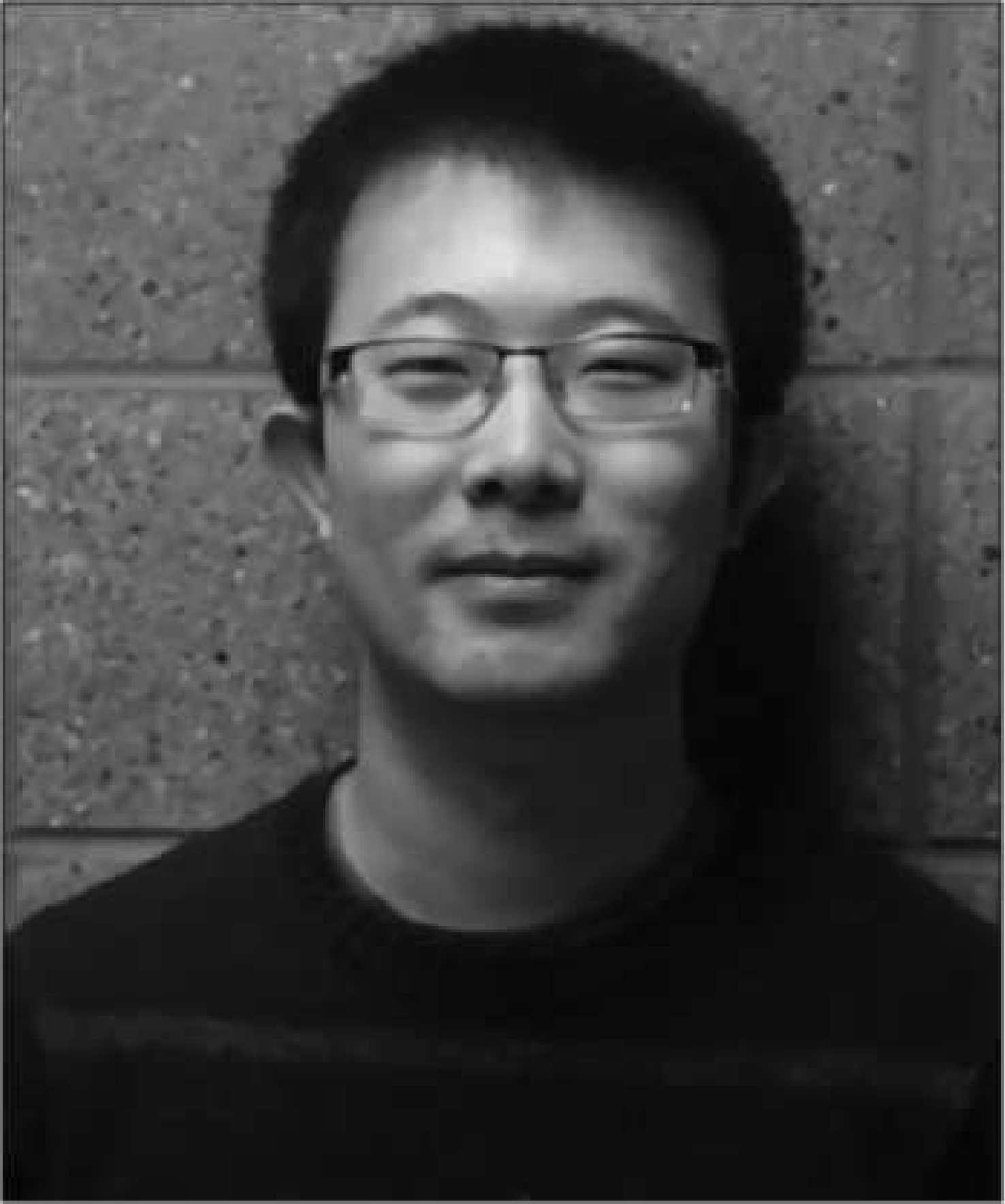}}]{Di Zhuang}
(S'15) received the B.E. degree in computer science and information security from Nankai University, China. He is currently pursuing the Ph.D. degree in electrical engineering with University of South Florida, Tampa. His research interests include cyber security, social network science, privacy enhancing technologies, machine learning and big data analytics. He is a student member of IEEE.
\end{IEEEbiography}
\vskip 0pt plus -1fil
\begin{IEEEbiography}[{\includegraphics[width=1.05in,height=2.0in,clip,keepaspectratio]{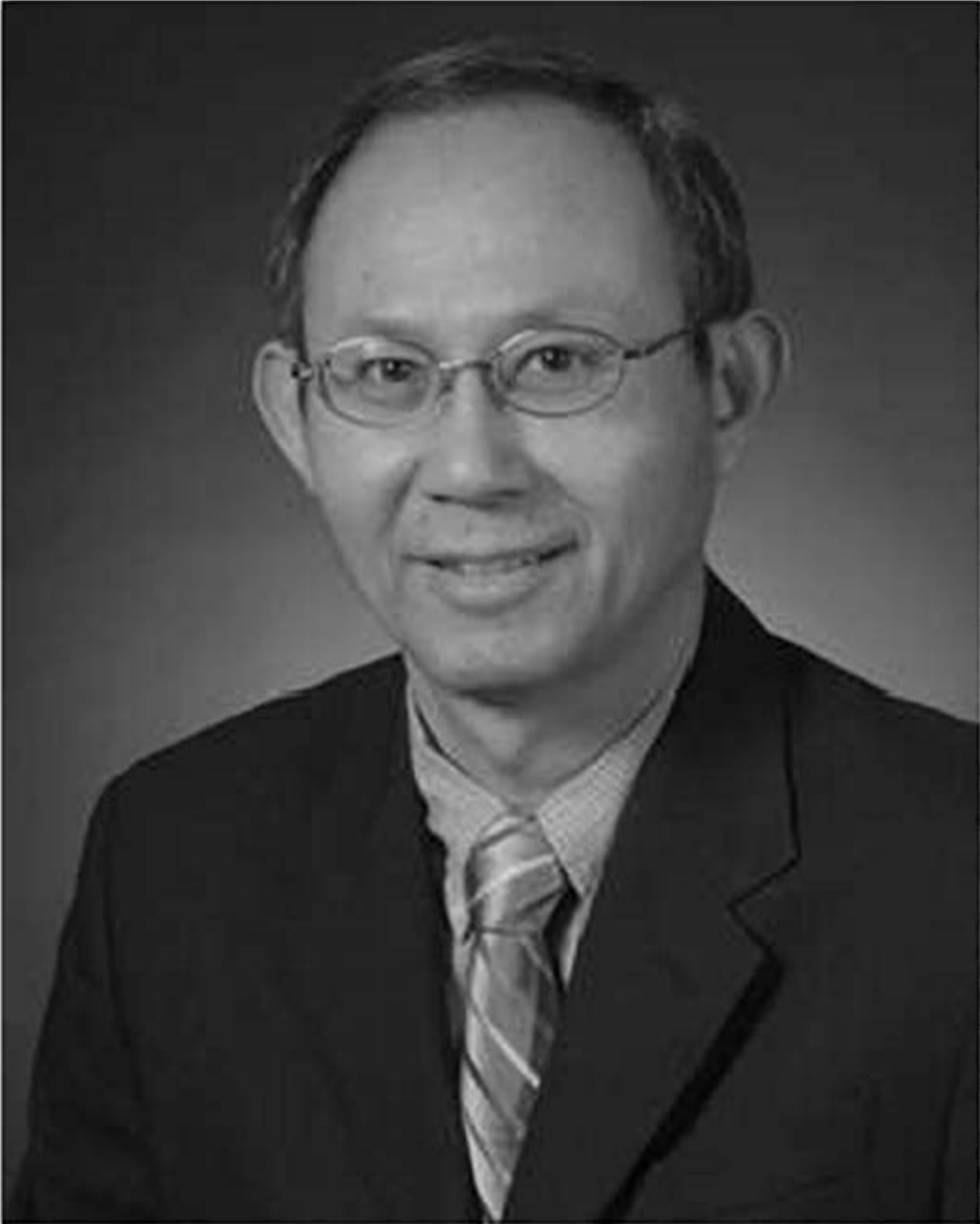}}]{J. Morris Chang}
(SM'08) is a professor in the Department of Electrical Engineering at the University of South Florida. He received the Ph.D. degree from the North Carolina State University. His past industrial experiences include positions at Texas Instruments, Microelectronic Center of North Carolina and AT\&T Bell Labs. He received the University Excellence in Teaching Award at Illinois Institute of Technology in 1999. His research interests include: cyber security, wireless networks, and energy efficient computer systems. In the last six years, his research projects on cyber security have been funded by DARPA. Currently, he is leading a DARPA project under Brandeis program focusing on privacy-preserving computation over Internet. He is a handling editor of Journal of Microprocessors and Microsystems and an editor of IEEE IT Professional. He is a senior member of IEEE.
\end{IEEEbiography}
\end{document}